\renewcommand{\phi}{\varphi}
\newcommand{\N}{\mathbb{N}}
\newcommand{\R}{\mathbb{R}}
\newcommand{\C}{\mathbb{C}}
\newcommand{\OO}{\mathrm{O}}
\newcommand{\OT}{\mathrm{O}_T}
\newcommand{\Hom}{\mathrm{Hom}}
\newcommand{\Sym}{\mathrm{Sym}}
\newcommand{\Ric}{\mathrm{Ric}}
\newcommand{\tr}{\mathrm{tr}}
\newcommand{\Tr}{\mathrm{Tr}}
\newcommand{\grad}{\mathrm{grad}}
\newcommand{\gradx}{\mathrm{grad}_x}
\renewcommand{\div}{\mathrm{div}}
\newcommand{\Dx}{\Delta_x}
\newcommand{\n}{\nabla}
\newcommand{\nx}{\nabla_x}
\newcommand{\id}{\mathrm{id}}
\renewcommand{\epsilon}{\varepsilon}
\newcommand{\eps}{\varepsilon}
\newcommand{\del}{\partial}
\newcommand{\<}{\langle}
\renewcommand{\>}{\rangle}
\newcommand{\Cinf}{C^\infty}
\newcommand{\Hh}{H_\hbar}
\newcommand{\Hhx}{H_{\hbar,x}}
\newcommand{\MM}{M\bowtie M}
\newcommand{\xy}{\overline{xy}}
\newcommand{\dom}{\mathrm{dom}}
\theoremstyle{plain}
\newtheorem{thm}{Theorem}[section]
\newtheorem{prop}[thm]{Proposition}
\newtheorem{cor}[thm]{Corollary}
\newtheorem{lem}[thm]{Lemma}
\theoremstyle{definition}
\newtheorem{rem}[thm]{Remark}
\newtheorem{exm}[thm]{Example}
\begin{document}

\title{{Asymptotic heat kernel expansion in the semi-classical limit}}

\author{Christian B\"ar, Frank Pf\"affle}

\address{Universit\"at Potsdam\\
Institut f\"ur Mathematik\\
Am Neuen Palais 10\\
14469 Potsdam\\
Germany}

\email{baer@math.uni-potsdam.de,pfaeffle@math.uni-potsdam.de}

\subjclass[2000]{35K05,58J35,81C99}
\keywords{semi-classical limit, asymptotic expansion of heat kernel, Riemannian manifold, quantum partition function, classical partition function}

\date{\today}

\maketitle

\begin{abstract}
Let $\Hh = \hbar^2L +V$ where $L$ is a self-adjoint Laplace type operator acting on sections of a vector bundle over a compact Riemannian manifold and $V$ is a symmetric endomorphism field.
We derive an asymptotic expansion for the heat kernel of $\Hh$ as $\hbar \searrow 0$.
As a consequence we get an asymptotic expansion for the quantum partition function and we see that it is asymptotic to the classical partition function.
Moreover, we show how to bound the quantum partition function for positive $\hbar$ by the classical partition function.
\end{abstract}

\section{Introduction}

In this paper we study semi-classical approximations for the heat kernel of a general self-adjoint Laplace type operator in a geometric context.
More precisely, let $M$ be an $n$-dimensional compact Riemannian manifold without boundary and let $E\to M$ be a Riemannian or Hermitian vector bundle.
Let $L$ be a self-adjoint Laplace-type operator with smooth coefficients acting on sections of $E$.
Important examples of such operators are the Laplace-Beltrami operator acting on functions, more generally, the Hodge-Laplacian acting on differential forms, and the square of the Dirac operator acting on spinors.
We fix a symmetric endomorphism field $V$ (the potential) which need not be a scalar multiple of the identity.
For any $\hbar>0$ we consider the self-adjoint operator $\Hh := \hbar^2L +V$.
One is now interested in the behavior of $\Hh$ as $\hbar \searrow 0$.

The solution operator $e^{-t\Hh}$ for the heat equation $\frac{\del u}{\del t} + \Hh u = 0$ has a smooth integral kernel $k(x,y,t,\hbar)$ which we briefly call the heat kernel.
Our main result, Theorem~\ref{thm:haupt}, states that there is an asymptotic expansion
$$
k(x,y,t,\hbar)
\quad\stackrel{\hbar \searrow 0}{\sim}\quad
\chi(d(x,y))\cdot q(x,y,t\hbar^2)\cdot \sum_{j=0}^\infty (t\hbar^2)^j \cdot \phi_j(x,y,t) .
$$
Here $d(x,y)$ denotes the Riemannian distance of $x$ and $y$, $\chi$ is a suitable cut-off function, and $q$ is explicitly given by the Euclidean heat kernel, $q(x,y,\tau) = (4\pi \tau)^{-n/2} \cdot \exp\left(-{d(x,y)^2}/{4\tau}\right)$.
The $\phi_j$ are smooth sections which can be determined recursively by solving appropriate transport equations.
Theorem~\ref{thm:haupt} is optimal in the sense that the asymptotic expansion holds for all derivatives with respect to $x$, $y$, and $t$ and is uniform in $x,y\in M$ and $t\in (0,T]$ for any $T>0$.

Now fix $t>0$ and define the quantum partition function $Z_Q(\hbar) := \Tr(e^{-t\Hh})$.
As a direct consequence of Theorem~\ref{thm:haupt} we get Corollary~\ref{cor:Zexpand}
\begin{equation}
Z_Q(\hbar)
\quad\stackrel{\hbar\searrow 0}{\sim}\quad 
(2\sqrt{\pi t}\hbar)^{-n}\cdot \sum_{j=0}^\infty a_j(t)\cdot (t\hbar^{2})^j
\label{eq:ZQto0}
\end{equation}
This result has been proved in \cite[Sec.~2]{TS} by technically rather involved methods from pseudo-differential calculus.
The corresponding classical partition function is given by $Z_C(\hbar) = (2\pi\hbar)^{-n}\int_{T^*M}\tr\left(\exp(-t(|p|^2\cdot\id+V(x)))\right)\,dpdx$.
We directly obtain Corollary~\ref{cor:ZqZc}:
\begin{equation}
\lim_{\hbar\searrow 0}\frac{Z_Q(\hbar)}{Z_C(\hbar)} = 1.
\label{eq:ZQZC}
\end{equation}
This asymptotic statement corresponds to the following bound valid for small positive $\hbar$ (Corollary~\ref{cor:ZAbsch}):
\[
\frac{Z_Q(\hbar)}{Z_C(\hbar)}
\,\,\le\,\, 
c_3 \cdot e^{c_2\cdot t\hbar^2}\cdot \frac{v_{0,n}\left(\sqrt{t\hbar^2}\right)}{v_{K,n}\left(\sqrt{t\hbar^2}\right)} .
\]
Here $v_{K,n}(r)$ denotes the volume of the geodesic ball of radius $r$ in the $n$-dimensional model space of constant curvature $K$.
The constant $c_3$ depends only on the dimension $n$ of $M$, the constant $c_2$ depends on $n$, on a lower bound for the Ricci curvature, and on a lower bound for the potential of $L$.

In the case that the underlying manifold is Euclidean space and the operator $L$ is the classical Laplace operator acting on functions the optimal inequality
$$
\frac{Z_Q(\hbar)}{Z_C(\hbar)}
\,\,\le\,\, 
1
$$
was independently obtained by Golden \cite{G}, Symanzik \cite{S}, and Thompson \cite{T}.
In this classical situation \eqref{eq:ZQZC} can be found in Simon's book \cite[Thm.~10.1]{Sim2}.
In the case of a general compact manifold the asymptotic expansion of $Z_Q(\hbar)$ is contained in \cite[Prop.~2.4]{TS}.

Our methods are rather direct and explicit.
We use standard facts from geometric analysis.
The estimates on the quantum partition function are based on the Golden-Thompson inequality, the Hess-Schrader-Uhlenbrock estimate and an estimate on the heat kernel of the Laplace-Beltrami operator due to Schoen and Yau.
No pseudo-differential calculus or microlocal analysis are needed.

{\em Acknowledgement.}
We would like to thank Markus Klein for very helpful discussions on the topics of this papers and SFB 647 funded by {Deutsche Forschungsgemeinschaft} for financial support.

\section{The heat kernel}\label{sec:kern}

Let $M$ be an $n$-dimensional compact Riemannian manifold without boundary.
In local coordinates the Riemannian metric is denoted by $g_{ij}$, its inverse matrix by $g^{ij}$.

Let $E\to M$ be a Riemannian or Hermitian vector bundle.
We denote the metric on $E$ by $\<\cdot,\cdot\>$.
Let $L$ be a self-adjoint Laplace-type operator with smooth coefficients acting on sections of $E$.
In local coordinates $x=(x^1,\ldots,x^n)$ and with respect to a local trivialization of $E$ we have
$$
L = -g^{ij}(x)\del_i\del_j + b^k(x)\del_k + c(x).
$$
Here the Einstein summation convention is understood, $b^k(x)$ and $c(x)$ are matrices depending smoothly on $x$, and $\del_i = \frac\del{\del x^i}$.

It is well-known \cite[Prop.~2.5]{BGV} that $L$ can globally be written in the form 
$$
L=\n^*\n + W
$$ 
where $\n$ is a metric connection on $E$, $\n^*$ is its formal $L^2$-adjoint, and $W$ is a smooth section of $\Sym(E)$, the bundle of fiberwise symmetric endomorphisms of $E$.

\begin{exm}
Let $E$ be the trivial line bundle, i.~e.\ sections of $E$ are nothing but functions.
Let the connection be the usual derivative, $\n=d$, and let $W=0$.
Then $L=\Delta = d^*d = -\div\circ\grad$, the Laplace-Beltrami operator.
\end{exm}

\begin{exm}
More generally, let $E=\Lambda^pT^*M$ be the bundle of $p$-forms.
Then the Hodge-Laplacian $L=d^*d+dd^*$ is a self-adjoint Laplace-type operator.
Here $\nabla$ is the connection induced on $E$ by the Levi-Civita connection.
For example, for $p=1$, the {\em Bochner formula} says $L=\n^*\n+\mathrm{Ric}$, see
\cite[p.~74, formula (2.51)]{Besse}
or \cite[Chap.~2, Cor.~8.3]{LM}.
\end{exm}

\begin{exm}
Let $M$ carry a spin structure and let $E$ be the spinor bundle.
If $D$ is the Dirac operator, then by the {\em Lichnerowicz-Schr\"odinger formula} $L:=D^2=\n^*\n + \frac14\mathrm{scal}$, see~\cite{Lich} or \cite[Chap.~2, Thm.~8.8]{LM}.
\end{exm}

Now fix another section $V\in \Cinf(M,\Sym(E))$.
For $\hbar>0$ we define the self-adjoint operator
\begin{equation}
\Hh := \hbar^2L + V.
\label{eq:defHh}
\end{equation}
For $t>0$ we use functional calculus to define the operator $e^{-t\Hh}$ as a bounded self-adjoint operator on the Hilbert space of square integrable sections of $E$, $L^2(M,E)$.
For any $u_0\in L^2(M,E)$ we can put $u(x,t):=(e^{-t\Hh}u_0)(x)$ and we get the unique solution to the heat equation
$$
\frac{\del u}{\del t} + \Hh u = 0
$$
subject to the initial condition
$$
u(x,0) = u_0(x).
$$
By elliptic theory $e^{-t\Hh}$ is smoothing and its Schwartz kernel $k(x,y,t,\hbar)$ depends smoothly on all variables $x,y\in M$, $t,\hbar >0$, see \cite[Sec.~2.7]{BGV}.

By $E\boxtimes E^* \to M\times M$ we denote the exterior tensor product bundle of $E$ with its dual bundle $E^*$.
Its fiber over $(x,y)\in M\times M$ is given by $(E\boxtimes E^*)_{(x,y)} = E_x\otimes E^*_y = \Hom(E_y,E_x)$.
Note that for fixed $t$ and $\hbar$ the heat kernel $k(\cdot,\cdot,t,\hbar)$ is a section of $E\boxtimes E^*$.
We define 
$$
q: M\times M \times (0,\infty) \to \R, \hspace{1cm} 
q(x,y,t) := (4\pi t)^{-n/2} \cdot \exp\left(-\frac{d(x,y)^2}{4t}\right),
$$ 
where $d(x,y)$ denotes the Riemannian distance of $x$ and $y$.
For technical reasons we define 
$$
\MM := M\times M \setminus \{(x,y)\,|\,\mbox{$x$ and $y$ are cut-points}\}.
$$
Recall that $x$ and $y$ are cut-points if either there are several geodesics of minimal length joing $x$ and $y$ or there is a Jacobi field along the unique shortest geodesic which vanishes at $x$ and $y$.
For example, on the standard sphere cut points are exactly antipodal points.
One always has that $\MM$ contains the diagonal and is an open dense subset of $M\times M$.
The function $q$ is smooth on $\MM\times (0,\infty)$.

We will often abbreviate $r:=d(x,y)$.
Then $r^2$ is a smooth function on $\MM$ and $r$ itself is smooth on $\MM$ away from the diagonal.

By $\gradx$ we denote the gradient with respect to the variable $x$, similarly for $\Dx$, $\Hhx$, and $\nx$.
Straightforward computation yields
\begin{eqnarray}
\gradx q &=& -\frac{q}{4t}\gradx(r^2) \label{eq:nablaq}\\
\Dx q &=& -q \cdot \left(\frac{r^2}{4t^2} + \frac{\Dx(r^2)}{4t}  \right) \\
\frac{\del q}{\del t} &=& q \cdot \left( \frac{r^2}{4t^2}  - \frac{n}{2t}\right)\\
\left(\frac{\del}{\del t} + \Dx \right)q &=& -q\cdot \frac{2n+\Dx(r^2)}{4t}\label{eq:dtDq}
\end{eqnarray}

\section{The formal heat kernel}\label{sec:formkern}

Now we make the following ansatz for a formal heat kernel of $\Hh$ over $\MM$:
\begin{equation}
\hat k(x,y,t,\hbar) = q(x,y,t\hbar^2)\cdot \sum_{j=0}^\infty (t\hbar^2)^j \cdot \phi_j(x,y,t).
\label{eq:defkhat}
\end{equation}

\begin{lem}\label{lem:formal}
There are unique continuous sections $\phi_j$ over $\MM \times [0,\infty)$, smooth over $\MM \times (0,\infty)$, such that 
\begin{itemize}
 \item[(i)] $\phi_0(y,y,0) = \id_{E_y}$\, for all $y\in M$,
 \item[(ii)] $\left(\frac{\del}{\del t} + \Hhx \right)\hat k = 0$.
\end{itemize}
\end{lem}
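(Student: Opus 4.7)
My plan is to substitute the ansatz \eqref{eq:defkhat} into $(\del_t + \Hhx)\hat k = 0$, commute $\Hhx$ past the scalar factor $q(x,y,t\hbar^2)$ using the standard Leibniz identity
$$
L_x(q\sigma) = (\Dx q)\sigma - 2\nabla_{\!\gradx q}\sigma + q\,L_x\sigma ,
$$
and apply the formulas \eqref{eq:nablaq}--\eqref{eq:dtDq} (with $\tau = t\hbar^2$) so that every term acquires $q$ as a common factor. Equating the coefficient of each power of $t\hbar^2$ in $(\del_t + \Hhx)\hat k$ to zero then converts condition (ii) into the decoupled recursive hierarchy
$$
\tfrac{1}{2}\,\nabla_{\!\gradx(r^2)}\phi_j + \Bigl(j - \tfrac{n}{2} - \tfrac{\Dx(r^2)}{4} + tV\Bigr)\phi_j + t\,\tfrac{\del\phi_j}{\del t} = -L_x\phi_{j-1}
$$
for $j\ge 0$, with the convention $\phi_{-1} := 0$. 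The problem thus reduces to solving this first-order linear transport equation on $\MM\times[0,\infty)$ one $j$ at a time.

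The principal part $\tfrac{1}{2}\gradx(r^2) + t\del_t$ is a smooth vector field on $\MM\times[0,\infty)$ which vanishes precisely on the corner $D := \{(y,y,0):y\in M\}$; away from $D$ the equation is a standard first-order PDE, solved uniquely by integration along characteristics once a value on $D$ is prescribed. To carry this out, fix $y\in M$ and parameterize $x = \exp_y(r\xi)$ with $\xi$ in the unit sphere of $T_yM$. By Gauss's lemma $\gradx(r^2) = 2r\del_r$, and introducing polar coordinates $(r,t) = \rho(\cos\theta,\sin\theta)$ with $\theta\in[0,\pi/2]$ turns the principal part into $\rho\del_\rho$: each characteristic is a ray out of $D$, labeled by the parameters $(\theta,\xi,y)$. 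Along such a ray the transport equation becomes the linear ODE
$$
\rho\,\nabla_{\!\del_\rho}\phi_j + F_j(\rho)\,\phi_j = -L_x\phi_{j-1},
\qquad F_j = j - \tfrac{n}{2} - \tfrac{\Dx(r^2)}{4} + tV ,
$$
and the Taylor expansion $\Dx(r^2) = -2n + O(r^2)$ at the diagonal gives $F_j(0) = j$ and $F_j(\rho) = j + O(\rho)$. For $j = 0$ this is a regular ODE, uniquely solved by the prescribed value (i); for $j\ge 1$ it has a regular singular point of weight $j > 0$ at $\rho = 0$, and the integrating factor $\rho^j$ produces a unique bounded solution whose value $\phi_j(y,y,0) = -\tfrac{1}{j}(L_x\phi_{j-1})(y,y,0)$ is read off from the equation at $\rho = 0$. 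Smooth dependence of ODE solutions on the parameters $(\theta,\xi,y)$, together with the smoothness of the change of coordinates $(r,t,\xi,y)\leftrightarrow(x,y,t)$ away from $D$, delivers smoothness on $\MM\times(0,\infty)$ and continuity on $\MM\times[0,\infty)$; uniqueness propagates through the induction in $j$.

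The main difficulty is the double degeneracy at the corner $D$: the characteristic vector field vanishes there, all integral curves collapse into the same single point, and the rescaled coefficient $F_j/\rho$ carries a $j/\rho$ singularity for $j\ge 1$. The device of combining the two small quantities $r$ and $t$ into a single polar radius $\rho$ straightens the characteristics into rays emanating from $D$ and converts the PDE into a Frobenius-type ODE whose unique bounded solution is precisely the one singled out by the prescribed value at the diagonal. Once this is in place, the induction on $j$ and the passage from the ODE to the section $\phi_j$ on $\MM\times[0,\infty)$ are routine.
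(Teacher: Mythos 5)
Your proposal follows essentially the same route as the paper: you derive the identical transport equations, pass to polar coordinates in the $(r,t)$-plane so that the characteristic field $t\del_t+r\n_{\gradx r}$ becomes $s\del_s$ along rays out of the corner, and solve the resulting regular-singular ODEs with the weight $s^j$ (the paper additionally constructs an explicit matrix integrating factor $A(s)$ solving $A'=A\cos(\theta)V$ to absorb the endomorphism-valued potential), with uniqueness for $j\ge1$ forced by boundedness/continuity exactly as you argue. The one point where the paper is more careful is smoothness of the $\phi_j$ across the diagonal $\{x=y,\ t>0\}$, where the polar parametrization $(r,\xi)\mapsto\exp_y(r\xi)$ degenerates: it substitutes $\sigma=us$ and rewrites the solution as an integral over $u\in[0,1]$ of manifestly smooth functions of $(x,y,t)$ on $\MM\times(0,\infty)$, a detail your appeal to smooth parameter dependence leaves implicit.
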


Assertion (i) means that for fixed $y$ and $\hbar$ any partial sum of the formal heat kernel $\hat k$ converges to the delta function at $y$ since this is clearly the case for $q$.
Assertion (ii) is to be understood in the sense that the formal series in the definiton \eqref{eq:defkhat} of $\hat k$ is differentiated termwise and then regrouped by powers of $\hbar$.

\begin{proof}[Proof of Lemma~\ref{lem:formal}]
Using \eqref{eq:nablaq},  \eqref{eq:dtDq}, and 
$\n^*\n(f\phi)= \Delta(f)\phi - 2\n_{\grad f}\phi + f\n^*\n\phi$
we compute 
\begin{eqnarray*}
\lefteqn{\left(\frac{\del}{\del t} + \Hhx \right)\hat k}\\
&=&
\sum_{j=0}^\infty\bigg\{  \left(\frac{\del}{\del t} + \hbar^2\Dx \right)(q(x,y,t\hbar^2)) \cdot (t\hbar^2)^j \cdot \phi_j + q\cdot\frac{\del}{\del t}((t\hbar^2)^j \cdot \phi_j)\\
&& \quad\quad
-2\hbar^2\cdot(t\hbar^2)^j\cdot\n_{\gradx q}\phi_j + q\cdot (t\hbar^2)^j\cdot\Hhx \phi_j\bigg \}\\
&=&
q\cdot\sum_{j=0}^\infty (t\hbar^{2})^{j}\bigg\{-\hbar^2\frac{2n+\Dx(r^2)}{4t\hbar^2}\phi_j +\frac{j}{t}\phi_j +\frac{\del}{\del t}\phi_j\\
&& \quad\quad\quad\quad\quad\quad
-2\hbar^2\n_{\frac{-1}{4t\hbar^2}\gradx(r^2)}\phi_j + \hbar^2L_x\phi_j + V\phi_j\bigg \}\\
&=&
q\cdot\sum_{j=0}^\infty t^{j-1}\hbar^{2j}\bigg\{-\frac{2n+\Dx(r^2)}{4}\phi_j +j\phi_j +t\frac{\del}{\del t}\phi_j\\
&& \quad\quad\quad\quad\quad\quad
+\n_{\frac{1}{2}\gradx(r^2)}\phi_j + \hbar^2tL_x\phi_j + tV\phi_j\bigg \}\\
&=&
q\cdot\sum_{j=0}^\infty t^{j-1}\hbar^{2j}\bigg\{\left(j-\frac{2n+\Dx(r^2)}{4}+ tV\right)\phi_j  +t\frac{\del}{\del t}\phi_j\\
&& \quad\quad\quad\quad\quad\quad
+r\n_{\gradx(r)}\phi_j + L_x\phi_{j-1} \bigg \} .
\end{eqnarray*}
Thus assertion (ii) is equivalent to the recursive {\em transport equations}
\begin{equation}
t\frac{\del}{\del t}\phi_j+r\n_{\gradx(r)}\phi_j + \left(j-\frac{2n+\Dx(r^2)}{4}+ tV\right)\phi_j  =- L_x\phi_{j-1}
\label{eq:transport}
\end{equation}
for $j=0,1,\ldots$ where we use the convention $\phi_{-1}\equiv0$.
The function $G:=(2n+\Dx(r^2))/4$ appearing in \eqref{eq:transport} is smooth on $\MM$ and vanishes on the diagonal $\{r=0\}$.

We observe that the transport equation \eqref{eq:transport} is an ordinary differential equation along the integral curves of the vector field $t\frac{\del}{\del t} + r\frac{\del}{\del r}$ in the $r$-$t$-surface, see Fig.~1.
More precisely, if we fix an angle $\theta$ and put $t=\cos(\theta)\cdot s$ and $r=\sin(\theta)\cdot s$, then \eqref{eq:transport} translates into
\begin{equation}
s\frac{\del}{\del s}\phi_j + \left(j-G + \cos(\theta)s V\right)\phi_j  =- L_x\phi_{j-1}
\label{eq:transport2}
\end{equation}
Here we identify the fibers of $E$ by parallel transport along the radial geodesics $\gamma$ emanating from $y$ so that $r\n_{\gradx(r)}$ becomes identified with $r\frac{\del}{\del r}$.

\begin{center}
\begin{pspicture}(0,-2.6770246)(9,2.6770246)
\psset{unit=6mm}
\psbezier[linewidth=0.04](0.02,1.2029753)(1.22,1.0829753)(1.7398921,0.77139497)(2.28,-0.03702468)(2.820108,-0.8454443)(3.26,-1.7370247)(2.76,-2.6570246)
\psbezier[linewidth=0.04](7.44,-1.3970246)(6.3,-0.7570247)(6.02,-0.85702467)(5.22,-1.0370247)(4.42,-1.2170247)(3.54,-1.6170247)(2.76,-2.6370246)
\psbezier[linewidth=0.04](7.44,-1.3970246)(7.72,-0.25702468)(6.122559,1.8689259)(5.56,2.2629752)(4.997441,2.6570246)(0.44,2.2229753)(0.0,1.2029753)
\psdots[dotsize=0.12](3.28,0.08297532)
\psbezier[linewidth=0.04](3.26,0.102975324)(4.22,0.50297534)(4.8,0.76297534)(5.56,0.76297534)(6.32,0.76297534)(6.4,0.6629753)(7.04,0.28297532)
\rput(3.255,-0.3){$y$}
\rput(2.3015625,1.3329753){$M$}
\rput(5.259531,1.2){$\gamma(r)$}

\psline[fillcolor=lightgray,fillstyle=solid,linewidth=0,linecolor=lightgray](10,-3.5)(10,3.5)(13,3.5)(13,-3.5)
\psline[linewidth=0.04cm]{->}(10,-3.5)(10,3.5)
\psline[linewidth=0.04cm]{->}(10,-3.5)(14,-3.5)
\psline[linewidth=0.04cm,linestyle=dotted](13,-3.5)(13,3.5)
\psline[linewidth=0.04cm](10,-3.5)(13,2.5)
\psarc(10,-3.5){2.5}{63}{90}
\rput(9.6,2.9){$t$}
\rput(13.5,-4){$r$}
\rput(10.4,-1.7){$\theta$}
\rput(12.5,2){$s$}
\end{pspicture} 

\emph{Fig.~1.} Various parameters and their relation
\end{center}

Since this differential equation is singular at $s=0$ we need to introduce integrating factors.
In order to write them down we solve the linear ODE
\begin{equation}\label{eq:ODE}
\frac{d}{ds}A(s) = A(s)\cdot \cos(\theta)\cdot V(\gamma(\sin(\theta) s)) , \quad\quad A(0)=\id,
\end{equation}
in the space of endomorphisms of $E_y$.
Of course, $A$ depends smoothly on all data such as $s$, $\theta$, $y=\gamma(0)$, and $\dot{\gamma}(0)$.
As long as $A(s)$ is regular the determinant of $A(s)$ satisfies the linear ODE
$$
\frac{d}{ds} \det(A(s)) = \det(A(s))\cdot \cos(\theta)\cdot \tr(V(\gamma(\sin(\theta) s))) ,
$$
hence $\det(A(s))$ cannot vanish anywhere since it would otherwise have to be identically $0$ contradicting the initial condition $\det(A(0))=1$.
Thus $A(s)$ remains invertible for all $s$. 
We now define
$$
R_j(s) := s^j\cdot \exp\left(-\int_0^s\frac{G\,d\sigma}{\sigma}\right)\cdot A(s).
$$
Since $G$ is smooth and vanishes at $s=0$ the integrand $G/\sigma$ is smooth along the segment parametrized by $[0,s]$.
Direct computation shows that \eqref{eq:transport2} is equivalent to 
\begin{equation}
\frac{\del}{\del s}(R_j\phi_j)  = -s^{-1}R_j L_x\phi_{j-1}
\label{eq:transport3}
\end{equation}
For $j=0$ this means that $R_0\phi_0=C_0$ is constant, i.~e.\ $\phi_0=C_0\cdot R_0^{-1} = C_0\cdot \exp\left(\int_0^s\frac{G\,d\sigma}{\sigma}\right)\cdot A(s)^{-1}$.
The initial condition (i) is now equivalent to $C_0=1$.
Thus we have
\begin{equation}
\phi_0=
\exp\left(\int_0^s\frac{G\,d\sigma}{\sigma}\right)\cdot A(s)^{-1} .
\label{eq:phi0}
\end{equation}
For $j\geq 1$ we note that $s^{-1}R_j$ is smooth also at $s=0$ and we get
$R_j\phi_j = -\int_0^s \sigma^{-1}R_jL_x\phi_{j-1}d\sigma + C_j$.
Evaluation at $s=0$ shows $C_j=0$.
We therefore have
\begin{equation*}
\phi_j = -R_j^{-1}\int_0^s \sigma^{-1}R_jL_x\phi_{j-1}d\sigma .
\label{eq:phij}
\end{equation*}
We have established uniqueness of the $\phi_j$.
As to existence we only need to ensure that \eqref{eq:phi0} and \eqref{eq:phij} define {\em smooth} sections over $\MM \times (0,\infty)$.
For $(x,y)\in \MM$ and $r\in [0,1]$ we let $\xy:[0,1]\to M$ be the unique shortest geodesic with $\xy(0)=y$ and $\xy(1)=x$.
In other words, in terms of the Riemannian exponential map $\xy(r) = \exp_y(r\exp_y^{-1}(x))$.
The map $\MM \times [0,1] \to M$, $(x,y,r) \mapsto \xy(r)$, is smooth.
Substituting $\sigma=us$ equation \eqref{eq:phi0} can be rewritten as
\begin{eqnarray}
\phi_0(x,y,t)
&=& 
\exp\left(\int_0^1 u^{-1}G(\xy(u),y)\,du\right) \cdot A(\sqrt{d(x,y)^2+t^2})^{-1}\quad
\label{eq:phi0:2}
\end{eqnarray}
and \eqref{eq:phij} becomes
\begin{eqnarray*}
\lefteqn{\phi_j(x,y,t)}\\
&=& 
-R_j(x,y,t)^{-1}\int_0^1 u^{-1}R_j(\xy(u),y,ut)\cdot(L_x\phi_{j-1})(\xy(u),y,ut)du 
\end{eqnarray*}
where 
\begin{eqnarray*}
\lefteqn{R_j(x,y,t)}\\ 
&=& 
(d(x,y)^2+t^2)^{j/2}\cdot\exp\bigg(-\int_0^1\frac{G(\xy(u),y)\,du}{u} \bigg) \cdot A(\sqrt{d(x,y)^2+t^2}).
\end{eqnarray*}
This shows smoothness of the $\phi_j$ on $\MM\times (0,\infty)$ and continuity on $\MM\times [0,\infty)$.
\end{proof}


\begin{rem}
If $\theta=0$, i.e., if $s=t$ and $x=y$, then \eqref{eq:ODE} becomes a linear ODE with constant coefficients,
$$
\frac{d}{ds}A(s) = A(s)\cdot  V(y) , \quad\quad A(0)=\id,
$$
and can be solved explicitly,
$$
A(s) = \exp(sV(y)).
$$
In particular, since $G(\overline{yy}(u),y)=G(y,y)=0$, \eqref{eq:phi0:2} becomes
\begin{equation}
\phi_0(y,y,t) =  \exp(-tV(y))
\label{eq:phi0x=y}
\end{equation}

\end{rem}

{\bf Construction of the approximate kernel.}
Now we fix $\eta>0$ smaller than the injectivity radius of $M$.
This means that $\{r\leq\eta\} \subset \MM$.
We choose a smooth cutoff function $\chi:\R\to\R$ such that $\chi \equiv 1$ on $(-\infty,\eta/2]$ and $\chi\equiv 0$ on $[\eta,\infty)$.
For $N\in\N$ we set
\begin{equation}
\hat k^{(N)}(x,y,t,\hbar) := \chi(d(x,y))\cdot q(x,y,t\hbar^2)\cdot
\sum_{j=0}^N (t\hbar^2)^j\cdot \phi_j(x,y,t) .
\label{eq:defkhutN}
\end{equation}
Note that $\hat k^{(N)}$ is a smooth section of $E\boxtimes E^*$ over $M\times M\times (0,\infty)\times (0,\infty)$ and not just over $\MM \times (0,\infty)\times (0,\infty)$.

\begin{thm}\label{thm:haupt}
Let $M$ be an $n$-dimensional compact Riemannian manifold without boundary.
Let $E\to M$ be a Riemannian or Hermitian vector bundle and let $\Hh$ be as in \eqref{eq:defHh}.
Let $k(x,y,t,\hbar)$ be the heat kernel of $\Hh$.

Let $T>0$ and $\ell,j\in\N_0$.
For $N>n+\ell+j$ let $\hat k^{(N)}(x,y,t,\hbar)$ be the approximate heat kernel as defined in \eqref{eq:defkhutN}.
Then 
$$
\sup_{t\in(0,T]}\left\|\frac{\del^j}{\del t^j}\left(k-\hat k^{(N)}\right)\right\|_{C^\ell(M\times M)} = \OO(\hbar^{2N+1-2n-2\ell-2j})
\hspace{1cm} (\hbar\searrow 0) .
$$
\end{thm}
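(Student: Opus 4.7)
The plan is to exhibit $k-\hat k^{(N)}$ via Duhamel's principle and estimate the resulting inhomogeneity. Set $e_N:=(\del_t+\Hhx)\hat k^{(N)}$. On the region $\{d(x,y)\le\eta/2\}$ where $\chi\equiv 1$, the calculation in the proof of Lemma~\ref{lem:formal} applies verbatim: since $\phi_0,\dots,\phi_N$ satisfy the transport equations \eqref{eq:transport} while the $L_x\phi_N$ contribution coming from applying $\Hhx$ to the $N$-th summand has no partner to cancel, all terms collapse to the single surviving one
$$
e_N(x,y,t,\hbar)\;=\;q(x,y,t\hbar^2)\cdot t^N\hbar^{2(N+1)}\cdot L_x\phi_N(x,y,t).
$$
On the transition annulus $\{\eta/2\le d(x,y)\le\eta\}$, extra contributions arise from derivatives of $\chi(d(x,y))$; each carries a factor $q(x,y,t\hbar^2)$ evaluated at $d(x,y)\ge\eta/2$, hence the Gaussian weight $\exp(-\eta^2/(16t\hbar^2))$, and so contributes $\OO(\hbar^\infty)$ in every $C^\ell$-norm, uniformly for $t\in(0,T]$.

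\textbf{Step 2 (Duhamel).} Condition (i) of Lemma~\ref{lem:formal} together with the fact that $q(\cdot,y,\tau)\to\delta_y$ as $\tau\searrow 0$ implies that $\hat k^{(N)}(\cdot,y,t,\hbar)\to\delta_y$ in the sense of distributions as $t\searrow 0$. Hence $k-\hat k^{(N)}$ has vanishing initial data and satisfies the inhomogeneous heat equation $(\del_t+\Hhx)(k-\hat k^{(N)})=-e_N$, so
$$
(k-\hat k^{(N)})(x,y,t,\hbar)\;=\;-\int_0^t\!\int_M k(x,z,t-s,\hbar)\,e_N(z,y,s,\hbar)\,dz\,ds.
$$

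\textbf{Step 3 (estimate).} The principal part of $e_N$ obeys the Gaussian bound $\|e_N(\cdot,y,s,\hbar)\|_{L^2(M)}\le C\,(s\hbar^2)^{-n/2}\,s^N\hbar^{2(N+1)}$ uniformly in $y$, since $\phi_N$ is continuous on the compact set $\overline{\{d\le\eta\}}\times[0,T]$. Combined with the $L^\infty\to L^\infty$ boundedness of $e^{-(t-s)\Hh}$ (for which $V$ bounded below and the Hess-Schrader-Uhlenbrock comparison suffice), this yields the case $\ell=j=0$ once $N>n$. To upgrade to $C^\ell$ and $\del_t^j$ one passes derivatives onto the factor $k(x,z,t-s,\hbar)$: space derivatives in $x$ are absorbed through the smoothing estimate $\|\Hh^m e^{-\tau\Hh}\|_{L^2\to L^2}\le C\tau^{-m}$ followed by Sobolev embedding $H^{2m}(M)\hookrightarrow C^\ell(M)$ for $2m>\ell+n/2$, while $\del_t^j$ is traded for $(-\Hh)^j$ via the heat equation. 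Since $\Hh=\hbar^2L+V$, each half-power of $\Hh$ inserts a factor $\hbar^{-1}$; derivatives in $y$ are handled symmetrically by viewing $k$ as the kernel of the $L^2$-adjoint of $e^{-t\Hh}$. Summing these $\hbar$-losses and demanding that the $s$-integral $\int_0^t s^{N-n/2-\ell-j}\,ds$ converge reproduces exactly the hypothesis $N>n+\ell+j$ and yields the claimed order $\OO(\hbar^{2N+1-2n-2\ell-2j})$.

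The main technical obstacle is step three, where three competing scalings must be balanced: the Gaussian blow-up of $e_N$ as $s\searrow 0$, the degeneracy of the smoothing of $e^{-(t-s)\Hh}$ as $t-s\searrow 0$, and the powers of $\hbar^{-1}$ incurred when converting $L^2$-bounds to $C^\ell$-bounds via Sobolev embedding. Uniformity of all constants in $(x,y)\in M\times M$ rests on compactness of $M$ together with the global Gaussian estimates for both $q$ and the heat kernel of $\Hh$.
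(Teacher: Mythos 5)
Your Steps 1 and 2 match the paper: the computation of the Duhamel inhomogeneity $e_N=\chi\, q\, t^N\hbar^{2N+2}L_x\phi_N+\OO(\hbar^\infty)$ and the Duhamel representation of $k-\hat k^{(N)}$ are exactly the starting point of the actual proof, and your $L^\infty$ argument for the case $\ell=j=0$ (via boundedness of the semigroup) is sound. The genuine gap is in Step 3, the only place where real work is needed. You propose to obtain $C^\ell$ and $\del_t^j$ bounds by ``passing derivatives onto the factor $k(x,z,t-s,\hbar)$'' and using the smoothing estimate $\|\Hh^m e^{-\tau\Hh}\|_{L^2\to L^2}\le C\tau^{-m}$. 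With $\tau=t-s$ this inserts a factor $(t-s)^{-m}$ into the Duhamel integral, which is not integrable at $s=t$ for any $m\ge 1$; your stated convergence condition $\int_0^t s^{N-n/2-\ell-j}\,ds<\infty$ addresses the wrong endpoint and does not correspond to the mechanism you describe, so the balance of scalings you claim ``reproduces exactly $N>n+\ell+j$'' is not actually derived. The way around this (and what the paper does) is the opposite move: commute the derivatives through the semigroup onto the explicit remainder, $\Hhx^m\int_0^t e^{-(t-s)\Hhx}r_N\,ds=\int_0^t e^{-(t-s)\Hhx}\Hhx^m r_N\,ds$, using only the uniform $L^2$-$L^2$ bound $\|e^{-(t-s)\Hhx}\|\le e^{c_1 t}$; since $\Hhx^m r_N=\OO(s^{N-n/2-2m}\hbar^{2N+2-n-2m})$ and $L_y^m r_N=\OO(s^{N-n/2-2m}\hbar^{2N+2-n-4m})$, the $s$-integral stays finite for $N$ large and the $\hbar$-losses are explicit.

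A second problem is your treatment of the $y$-derivatives ``symmetrically by viewing $k$ as the kernel of the adjoint'': in your Duhamel formula the propagator factor $k(x,z,t-s,\hbar)$ does not depend on $y$ at all, so there is no symmetry to exploit; the $y$-derivatives fall on $e_N(z,y,s,\hbar)$ and must be estimated there (each $L_y$ costs a factor $(s\hbar^2)^{-2}$ through the Gaussian $q$). Moreover, Sobolev embedding $H^{2m}(M)\hookrightarrow C^\ell(M)$ in $x$ alone only gives bounds for fixed $y$; to get the asserted uniform $C^\ell(M\times M)$ bound one needs joint regularity in $(x,y)$, which the paper obtains by estimating $(L_x+L_y)^m(\hat k^{(N)}-k)$ in $L^2(M\times M)$, applying elliptic estimates for this operator on the $2n$-dimensional product, and then embedding $H^{2m}(M\times M)\hookrightarrow C^\ell(M\times M)$ — this is also where the exponent $2N+1-2n-2\ell$ and the hypothesis $N>n+\ell+j$ actually come from. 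Finally, the $t$-derivatives are handled not on the propagator but by the equation for the difference itself, $\del_t(\hat k^{(N)}-k)=-\hbar^2L_x(\hat k^{(N)}-k)-V(\hat k^{(N)}-k)+r_N$, followed by induction on $j$; your idea of trading $\del_t$ for $\Hh$ is the right spirit, but as written it again lands the powers of $\Hh$ on the semigroup and runs into the same non-integrable singularity.
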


\begin{proof}
By construction of $\hat k^{(N)}$ we have (using that $q=\OO((t\hbar^2)^{-n/2})$ and $q=\OO((t\hbar^2)^\infty)$ in the region where $\chi(d(x,y))$ is not locally constant):
\begin{eqnarray*}
r_N 
&:=&
\left(\frac{\del}{\del t} + \Hhx \right)(\hat k^{(N)} - k)\\
&=&
\left(\frac{\del}{\del t} + \Hhx \right)(\hat k^{(N)})\\
&=&
\chi(d(x,y))\cdot\left(\frac{\del}{\del t} + \Hhx \right)\left(q \cdot \sum_{j=0}^N(t\hbar^2)^j\cdot\phi_j\right) +\OO((t\hbar^2)^\infty)\\
&=&
\chi(d(x,y))\cdot q\cdot t^N \hbar^{2N+2} L_x\phi_N+\OO((t\hbar^2)^\infty)\\
&=&
\OO(t^{N-n/2}\hbar^{2N+2-n}) .
\end{eqnarray*}
Similarly, for any $m\in\N$ we get
$$
(\Hhx)^m (r_N) = \OO(t^{N-n/2-2m}\hbar^{2N+2-n-2m}) 
$$
and 
$$
(L_y)^m (r_N) = \OO(t^{N-n/2-2m}\hbar^{2N+2-n-4m}) .
$$
By the choice of the initial condition $\hat k^{(N)} - k$ vanishes at $t=0$.
Thus uniqueness of the solution to the heat equation (Duhamel's principle) implies
\begin{eqnarray*}
(\hat k^{(N)} - k)(x,y,t,\hbar)
&=&
\int_0^t e^{-(t-s)\Hhx}r_N(\cdot,y,s,\hbar)ds .
\end{eqnarray*}
The spectrum of $L$ is bounded from below, hence $\Hh=\hbar^2L+V \geq -c_1$ for all $\hbar\leq 1$ where $c_1$ is a suitable positive constant.
Thus we have for the $L^2$-$L^2$-operator norm
$$
\|e^{-(t-s)\Hhx}\|_{L^2,L^2} \quad\leq\quad e^{c_1t}.
$$
Therefore we get for all $t\in(0,T]$
\begin{eqnarray*}
\|(\hat k^{(N)} - k)(\cdot,y,t,\hbar)\|_{L^2(M)}
&\leq&
\int_0^t \|e^{-(t-s)\Hhx}\|_{L^2,L^2}\cdot\|r_N(\cdot,y,s,\hbar)\|_{L^2(M)}ds\\
&\leq&
\int_0^t e^{c_1t} \cdot c_2 \cdot s^{N-n/2}\hbar^{2N+2-n} ds\\
&\leq&
c_3(T) \cdot \hbar^{2N+2-n} .
\end{eqnarray*}
Furthermore,
\begin{eqnarray*}
\lefteqn{\|L_x(\hat k^{(N)} - k)(\cdot,y,t,\hbar)\|_{L^2(M)}}\\
&=&
\hbar^{-2}\|(\Hhx-V))(\hat k^{(N)} - k)(\cdot,y,t,\hbar)\|_{L^2(M)}\\
&\leq&
\hbar^{-2}\left\|\Hhx\int_0^t e^{-(t-s)\Hhx}r_N(\cdot,y,s,\hbar)ds\right\|_{L^2(M)}\\
&&
+\,\, c_4\hbar^{-2}\|(\hat k^{(N)} - k)(\cdot,y,t,\hbar)\|_{L^2(M)}\\
&\leq&
\hbar^{-2}\left\|\int_0^t e^{-(t-s)\Hhx}\Hhx\, r_N(\cdot,y,s,\hbar)ds\right\|_{L^2(M)}
+ \OT(\hbar^{2N-n})\\
&\leq&
\hbar^{-2}\cdot c_5(T)\cdot\left\|\Hhx\, r_N(\cdot,y,s,\hbar)\right\|_{L^2(M)}
+ \OT(\hbar^{2N-n})\\
&=&
\OT(\hbar^{2N-n-2})
\end{eqnarray*}
and 
\begin{eqnarray*}
\|L_y(\hat k^{(N)} - k)(\cdot,y,t,\hbar)\|_{L^2(M)}
&=&
\left\|\int_0^t e^{-(t-s)\Hhx}L_y\, r_N(\cdot,y,s,\hbar)ds\right\|_{L^2(M)}\\
&=&
\OT(\hbar^{2N-2-n}).
\end{eqnarray*}

Here the lower index $T$ in $\OT(\cdots)$ indicates that the constant bounding the $\OT(\hbar^{2N-n-2})$-term by $\hbar^{2N-n-2}$ depends on $T$.
Integration with respect to $y$ yields
$$
\|(L_x+L_y)(\hat k^{(N)} - k)(\cdot,\cdot,t,\hbar)\|_{L^2(M\times M)}
=\OT(\hbar^{2N-2-n}).
$$
Inductively, we get
$$
\|(L_x+L_y)^m(\hat k^{(N)} - k)(\cdot,\cdot,t,\hbar)\|_{L^2(M\times M)}
= \OT(\hbar^{2N+2-n-4m}).
$$
By the elliptic estimates we have for the Sobolev norms
$$
\|(\hat k^{(N)} - k)(\cdot,\cdot,t,\hbar)\|_{H^{2m}(M\times M)}
= \OT(\hbar^{2N+2-n-4m})
$$
and by the Sobolev embedding theorem
$$
\|(\hat k^{(N)} - k)(\cdot,\cdot,t,\hbar)\|_{C^\ell(M\times M)}
= \OT(\hbar^{2N+1-2n-2\ell}) .
$$
Similarly, we have
$$
\|r_N(\cdot,\cdot,t,\hbar)\|_{C^\ell(M\times M)}
= \OT(\hbar^{2N+1-2n-2\ell}) .
$$
It remains to control the $t$-derivatives.
We compute
\begin{eqnarray*}
\frac{\del}{\del t}(\hat k^{(N)} - k)
&=&
-\Hhx(\hat k^{(N)} - k) + r_N\\
&=&
-\hbar^2 L_x(\hat k^{(N)} - k) - V\cdot(\hat k^{(N)} - k)+ r_N,
\end{eqnarray*}
thus
\begin{eqnarray*}
\lefteqn{\left\|\frac{\del}{\del t}(\hat k^{(N)} - k)\right\|_{C^\ell(M\times M)}}\\
&\leq&
\hbar^2 \cdot c_6\cdot \|\hat k^{(N)} - k\|_{C^{\ell+2}(M\times M)}
+ c_7 \cdot \|\hat k^{(N)} - k\|_{C^\ell(M\times M)}\\
&&
+ \|r_N\|_{C^\ell(M\times M)}\\
&=&
\OT(\hbar^{2N+1-2n-2\ell-2}) .
\end{eqnarray*}
An induction finally proves the theorem.
\end{proof}

\begin{rem}
We were somewhat generous in the application of the Sobolev embedding theorem.
With a little more care we can improve the statement of Theorem~\ref{thm:haupt} as follows:

If in addition to $T,\ell,j$ and $N$ we are given $\eps>0$, then
$$
\sup_{t\in(0,T]}\left\|\frac{\del^j}{\del t^j}\left(k-\hat k^{(N)}\right)\right\|_{C^\ell(M\times M)} = \OO(\hbar^{2N+2-\eps-2n-2\ell-2j})
\hspace{1cm} (\hbar\searrow 0) .
$$
\end{rem}

\begin{rem}
It would be nice to extend Theorem~\ref{thm:haupt} to operator families of the form
$$
\Hh = \hbar^2 L + \hbar D + V
$$
where $D$ is a formally self-adjoint differential operator of first order.
However, this seems not to work.
The resulting transport equations can no longer be solved.

There is a different approximation in \cite{TS} working also for $D\neq0$.
It is less explicit and makes heavy use of asymptotic expansions of total symbols of pseudodifferential operators.
For the trace of the heat operator it seems to give the same result.
We discuss this in the next section.
\end{rem}

\section{The classical and quantum partition functions}\label{sec:trace}

Applying Theorem~\ref{thm:haupt} with $j=\ell=0$ we have in particular
$$
|k(x,y,t,\hbar) - \hat k^{(N)}(x,y,t,\hbar)| \quad\leq\quad C\cdot \hbar^{2N+1-2n}
$$
for all $t\in (0,T]$, $x,y\in M$, and for $\hbar \searrow 0$.
We put $x=y$, take the pointwise trace and integrate over $M$ to obtain

\begin{cor}\label{cor:Zexpand}
Let $M$ be an $n$-dimensional compact Riemannian manifold without boundary.
Let $E\to M$ be a Riemannian or Hermitian vector bundle and let $\Hh$ be as in \eqref{eq:defHh}.
Let $T>0$.
Then we have an asymptotic expansion
$$
\Tr(e^{-t\Hh}) 
\quad\sim\quad 
(2\sqrt{\pi t}\hbar)^{-n}\cdot \sum_{j=0}^\infty a_j(t)t^j\cdot \hbar^{2j}
\hspace{2cm} (\hbar\searrow 0)
$$
uniform in $t\in(0,T]$.
\end{cor}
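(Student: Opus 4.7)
The plan is to deduce the asymptotic expansion of $\Tr(e^{-t\Hh})$ directly from the pointwise estimate in Theorem~\ref{thm:haupt}, by restricting the approximate kernel to the diagonal and integrating.

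First I would specialize $\hat k^{(N)}$ to $y=x$. Since $d(x,x)=0$ lies well below $\eta/2$, the cutoff satisfies $\chi(0)=1$, and moreover $q(x,x,t\hbar^2) = (4\pi t\hbar^2)^{-n/2} = (2\sqrt{\pi t}\hbar)^{-n}$. Substituting into \eqref{eq:defkhutN} yields
$$
\hat k^{(N)}(x,x,t,\hbar) = (2\sqrt{\pi t}\hbar)^{-n} \sum_{j=0}^N (t\hbar^2)^j\, \phi_j(x,x,t).
$$
Taking the fiberwise trace in $\End(E_x)$ and integrating over $x\in M$ produces
$$
\int_M \tr\, \hat k^{(N)}(x,x,t,\hbar)\, dx = (2\sqrt{\pi t}\hbar)^{-n}\sum_{j=0}^N a_j(t)\, t^j\hbar^{2j},
\qquad a_j(t) := \int_M \tr\, \phi_j(x,x,t)\, dx.
$$
By Lemma~\ref{lem:formal} the $a_j(t)$ are smooth in $t\in(0,\infty)$.

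Next I would compare this with $\Tr(e^{-t\Hh}) = \int_M \tr\, k(x,x,t,\hbar)\, dx$. Applying Theorem~\ref{thm:haupt} with $\ell=j=0$ gives the uniform bound stated just before the corollary, and integrating the pointwise trace of $k-\hat k^{(N)}$ over the (compact) diagonal yields
$$
\Tr(e^{-t\Hh}) = (2\sqrt{\pi t}\hbar)^{-n}\sum_{j=0}^N a_j(t)\, t^j\hbar^{2j} + \OT(\hbar^{2N+1-2n}),
$$
uniformly in $t\in(0,T]$.

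Finally I would reorganize this identity as an asymptotic expansion. For any prescribed order $j_0$ the goal is a remainder of order $\hbar^{-n+2(j_0+1)}$; choosing $N$ large enough that $2N+1-2n \geq 2(j_0+1)-n$, equivalently $N \geq j_0 + \lceil (n+1)/2\rceil$, ensures that $\OT(\hbar^{2N+1-2n})$ is absorbed into the first neglected term of the formal series. Truncating the finite sum at $j_0$ then delivers the statement. The only technical point is this choice of $N$ relative to $j_0$; the substantive work, namely the uniform $C^0$-estimate on all of $M\times M$, has already been carried out in Theorem~\ref{thm:haupt}, so no real obstacle remains.
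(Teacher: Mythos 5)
Your proposal is correct and follows essentially the same route as the paper: the authors likewise apply Theorem~\ref{thm:haupt} with $\ell=j=0$, restrict to the diagonal where $\chi(0)=1$ and $q(y,y,t\hbar^2)=(4\pi t\hbar^2)^{-n/2}$, set $a_j(t)=\int_M\tr(\phi_j(y,y,t))\,dy$, and absorb the $\OO(\hbar^{2N+1-2n})$ remainder by taking $N$ large; your explicit bookkeeping of $N$ versus the truncation order $j_0$ just spells out what the paper leaves implicit.
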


\begin{proof}
This follows directly from Theorem~\ref{thm:haupt} together with
$q(y,y,t\hbar^2) = (4\pi t\hbar^2)^{-n/2}$ and $a_j(t) = \int_M \tr(\phi_j(y,y,t))\,dy$.
\end{proof}

We fix $t>0$ and we define the {\em quantum partition function}
\begin{equation}
Z_Q(\hbar) := \Tr(e^{-t\Hh})
\label{eq:defZq}
\end{equation}
and the corresponding classical quantity
\begin{equation}
Z_C(\hbar) := (2\pi\hbar)^{-n}\int_{T^*M}\tr\left(\exp(-t(|p|^2\cdot\id+V(x)))\right)\,dpdx .
\label{eq:defZc}
\end{equation}
If $E$ is the trivial line bundle and $V$ is the potential energy this is the partition function in statistical mechanics with $t=1/k_BT$ where $T$ is the temperature and $k_B$ is Boltzmann's constant.
For convenience, we call it the  {\em classical partition function} also in our more general situation.
Then we have

\begin{cor}\label{cor:ZqZc}
Let $M$ be an $n$-dimensional compact Riemannian manifold without boundary.
Let $Z_Q(\hbar)$ and $Z_C(\hbar)$ be defined as in \eqref{eq:defZq} and \eqref{eq:defZc}.
Then
$$
\lim_{\hbar\searrow 0}\frac{Z_Q(\hbar)}{Z_C(\hbar)} = 1.
$$
\end{cor}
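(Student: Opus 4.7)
The plan is to evaluate both $Z_Q(\hbar)$ and $Z_C(\hbar)$ asymptotically and then simply compare leading terms. Everything needed is already on the table: Corollary~\ref{cor:Zexpand} provides the expansion of $Z_Q$, equation~\eqref{eq:phi0x=y} identifies $\phi_0$ on the diagonal, and $Z_C$ is a simple explicit integral.

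First I would apply Corollary~\ref{cor:Zexpand}, which gives
\[
Z_Q(\hbar) = (2\sqrt{\pi t}\hbar)^{-n}\bigl(a_0(t) + \OO(\hbar^2)\bigr) \qquad(\hbar\searrow 0),
\]
so the leading term is controlled by $a_0(t)=\int_M\tr(\phi_0(y,y,t))\,dy$. Substituting \eqref{eq:phi0x=y}, which says $\phi_0(y,y,t)=\exp(-tV(y))$, this becomes
\[
Z_Q(\hbar) = (4\pi t)^{-n/2}\hbar^{-n}\int_M \tr\bigl(e^{-tV(y)}\bigr)\,dy + \OO(\hbar^{2-n}).
\]

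Next I would compute $Z_C(\hbar)$ by Fubini. Since the operator $|p|^2\cdot\id$ commutes with $V(x)$, we have $\exp(-t(|p|^2\id+V(x)))=e^{-t|p|^2}e^{-tV(x)}$, so
\[
Z_C(\hbar) = (2\pi\hbar)^{-n}\int_M\tr\bigl(e^{-tV(x)}\bigr)\int_{T_x^*M}e^{-t|p|^2}\,dp\,dx.
\]
The inner Gaussian integral equals $(\pi/t)^{n/2}$, and collecting constants via $2^{-n}\pi^{-n/2}=(4\pi)^{-n/2}$ yields
\[
Z_C(\hbar) = (4\pi t)^{-n/2}\hbar^{-n}\int_M\tr\bigl(e^{-tV(x)}\bigr)\,dx.
\]

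Comparing the two formulas, the leading $\hbar^{-n}$-terms of $Z_Q(\hbar)$ and $Z_C(\hbar)$ coincide exactly, so
\[
\frac{Z_Q(\hbar)}{Z_C(\hbar)} = 1 + \OO(\hbar^2)\qquad(\hbar\searrow 0),
\]
which in particular gives the claimed limit. There is no real obstacle here; the only bookkeeping worth doing carefully is the normalization of the Gaussian integral and the identification $(2\sqrt{\pi t})^{-n}=(4\pi t)^{-n/2}$, so that the prefactors in the expansion of $Z_Q$ and in the explicit evaluation of $Z_C$ match on the nose. (Note that this short argument also silently reproves a familiar fact: to leading order the diagonal of the heat kernel of $\Hh$ is given by the Euclidean Gaussian times the pointwise exponential of $-tV$.)
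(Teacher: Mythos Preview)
Your argument is correct and essentially identical to the paper's: both compute $Z_C(\hbar)$ explicitly via the Gaussian integral and \eqref{eq:phi0x=y} to recognize it as $(2\sqrt{\pi t}\hbar)^{-n}a_0(t)$, and then invoke Corollary~\ref{cor:Zexpand} to match it with the leading term of $Z_Q(\hbar)$. The only cosmetic difference is the order of presentation.
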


\begin{proof}
Using $\int_{\R^n}\exp(-t|p|^2)\,dp = (\pi/t)^{n/2}$  and \eqref{eq:phi0x=y} we get
\begin{eqnarray}
Z_C(\hbar)
&=&
(2\pi\hbar)^{-n}  \cdot\int_{T^*M}\tr\left(\exp(-t|p|^2)\exp(-t\cdot V(x))\right)\,dpdx\nonumber\\
&=&
(2\pi\hbar)^{-n} \cdot(\pi/t)^{n/2} \cdot\int_M\tr(\exp(-t\cdot V(x)))\,dx\nonumber\\
&=&
(2\sqrt{\pi t}\hbar)^{-n}\cdot\int_M \tr(\phi_0(x,x,t))\,dx\nonumber\\
&=&
(2\sqrt{\pi t}\hbar)^{-n}\cdot a_0(t) .\nonumber
\end{eqnarray}
The assertion follows from Corollary~\ref{cor:Zexpand}.
\end{proof}

In the remainder of this section we will contrast this asymptotic comparison of $Z_Q(\hbar)$ and $Z_C(\hbar)$ with an inequality of the two partition functions which works for positive $\hbar$.
For this we need the following version of the Golden-Thompson inequality (see \cite{ENT}, \cite{Len} or \cite{Sim}).

\begin{lem}\label{lem:Golden-Thompson}
Let $B$ and $C$ be self-adjoint operators on a Hilbert space $H$, both bounded from below and such that $B+C$ is essentially self-adjoint on the intersection $\dom(B)\cap\dom(C)$ of the domains of $B$ and $C$. Then 
\[ 
\Tr\left( \exp(-(B+C))\right)\le \Tr\left( \exp(-B)\exp(-C)\right).
\]
\end{lem}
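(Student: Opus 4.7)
The plan is to combine the Lie-Trotter-Kato product formula with a classical trace-monotonicity inequality and then pass to a limit. First, the essential self-adjointness of $B+C$ on $\dom(B)\cap\dom(C)$ lets us invoke the Trotter-Kato product formula,
$$
e^{-(B+C)} \;=\; \lim_{n\to\infty} \bigl(e^{-B/n}\, e^{-C/n}\bigr)^n ,
$$
in the strong operator topology. Since $B$ and $C$ are bounded from below, both factors $e^{-B/n}$ and $e^{-C/n}$ are bounded positive operators.

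The decisive step is the Araki-Lieb-Thirring inequality: for any bounded positive operators $P$ and $Q$ and any integer $n\geq 1$, cyclicity of the trace gives $\Tr\bigl((PQ)^n\bigr) = \Tr\bigl((P^{1/2} Q P^{1/2})^n\bigr)$ and $\Tr(P^n Q^n) = \Tr\bigl(P^{n/2} Q^n P^{n/2}\bigr)$, so the Araki-Lieb-Thirring estimate reads
$$
\Tr\bigl((PQ)^n\bigr) \;\leq\; \Tr(P^n Q^n).
$$
Applying this with $P := e^{-B/n}$ and $Q := e^{-C/n}$ yields, for every $n\in\N$,
$$
\Tr\Bigl(\bigl(e^{-B/n}\, e^{-C/n}\bigr)^n\Bigr) \;\leq\; \Tr\bigl(e^{-B}\, e^{-C}\bigr).
$$

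It remains to pass to the limit $n\to\infty$. If the right-hand side is infinite there is nothing to prove, so assume $\Tr(e^{-B}e^{-C})<\infty$. The operators $\bigl(e^{-B/n}e^{-C/n}\bigr)^n$ are positive and converge strongly to the positive operator $e^{-(B+C)}$; the lower semicontinuity of $\Tr$ on the cone of positive operators (a Fatou-type property, which one verifies by expanding in an orthonormal basis and applying the ordinary Fatou lemma to the diagonal matrix elements) then yields
$$
\Tr\bigl(e^{-(B+C)}\bigr) \;\leq\; \liminf_{n\to\infty} \Tr\Bigl(\bigl(e^{-B/n}e^{-C/n}\bigr)^n\Bigr) \;\leq\; \Tr\bigl(e^{-B}\, e^{-C}\bigr).
$$

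The main obstacle is this last step: strong operator convergence does not in general imply convergence of traces, so one genuinely needs the semicontinuity of $\Tr$ on positive operators. The middle step, the Araki-Lieb-Thirring inequality, is also a classical but nontrivial piece of matrix analysis. For complete expositions of both the Trotter-Kato product formula in this level of generality and the trace inequality we refer to \cite{ENT}, \cite{Len} and \cite{Sim}.
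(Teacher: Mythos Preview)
The paper does not actually prove this lemma; it is stated with references to \cite{ENT}, \cite{Len} and \cite{Sim} and then used as a black box. Your sketch therefore supplies more than the paper itself does, and the overall strategy --- Trotter--Kato product formula, then the Lieb--Thirring/Araki trace inequality, then lower semicontinuity of the trace --- is the standard route to the infinite-dimensional Golden--Thompson inequality and is essentially correct.

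There is one genuine slip in the final step. You assert that the operators $\bigl(e^{-B/n}e^{-C/n}\bigr)^n$ are positive, but a product of two noncommuting positive operators is not even self-adjoint, so neither is its $n$-th power; hence your Fatou argument for the trace cannot be applied to these operators directly. The fix is already implicit in your own cyclicity remark: work instead with the symmetrized products
\[
S_n \;:=\; \bigl(e^{-B/(2n)}\,e^{-C/n}\,e^{-B/(2n)}\bigr)^n,
\]
which \emph{are} positive. The symmetric form of the Trotter--Kato formula gives $S_n \to e^{-(B+C)}$ strongly under the stated hypotheses, cyclicity of the trace yields $\Tr(S_n)=\Tr\bigl((e^{-B/n}e^{-C/n})^n\bigr)$, and your Araki--Lieb--Thirring bound together with lower semicontinuity of $\Tr$ on positive operators then finishes the argument exactly as you wrote. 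With this adjustment the proof is complete.
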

We will also need the following elementary assertion.

\begin{lem}\label{lem:traceestimate}
Let $A_1$ and $A_2$ be complex $N\times N$-matrices.
Let $A_2$ be Hermitian and nonnegative.
Then 
\begin{equation}
|\tr\left(A_1\, A_2\right)|\le |A_1|\cdot \tr\left(A_2\right)
\label{eq:trNtr}
\end{equation}
where $|A_1|$ denotes the operator norm of $A_1$.
\end{lem}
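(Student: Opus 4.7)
The natural plan is to diagonalize $A_2$, since its spectral decomposition is exactly the structure that lets us compare $\tr(A_1 A_2)$ to $\tr(A_2)$ in a scalar inequality. Because $A_2$ is Hermitian and nonnegative, there is an orthonormal basis $v_1,\ldots,v_N$ of $\C^N$ consisting of eigenvectors of $A_2$ with eigenvalues $\lambda_1,\ldots,\lambda_N \ge 0$.

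Computing the trace in this basis yields
\[
\tr(A_1 A_2) = \sum_{i=1}^N \langle A_1 A_2 v_i, v_i\rangle = \sum_{i=1}^N \lambda_i \, \langle A_1 v_i, v_i\rangle,
\]
so by the triangle inequality
\[
|\tr(A_1 A_2)| \le \sum_{i=1}^N \lambda_i \, |\langle A_1 v_i, v_i\rangle|.
\]
For each $i$, the Cauchy--Schwarz inequality combined with the definition of the operator norm gives
\[
|\langle A_1 v_i, v_i\rangle| \le \|A_1 v_i\|\cdot\|v_i\| \le |A_1|.
\]
Substituting this back produces $|\tr(A_1 A_2)| \le |A_1| \sum_i \lambda_i = |A_1|\cdot\tr(A_2)$, which is the desired bound.

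There is no real obstacle here; the only subtlety is remembering that nonnegativity of $A_2$ is essential, because it allows us to pull the $\lambda_i$ out of the absolute value in the second display. Without $\lambda_i \ge 0$ one would only get a bound involving $\sum_i |\lambda_i| = \tr|A_2|$ rather than $\tr(A_2)$ itself.
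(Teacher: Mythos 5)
Your proof is correct. It differs from the paper's only in the choice of decomposition: you diagonalize $A_2$ in an orthonormal eigenbasis $v_1,\ldots,v_N$ with eigenvalues $\lambda_i\ge 0$ and bound each term $\lambda_i|\langle A_1v_i,v_i\rangle|$, whereas the paper writes $A_2=B^2$ with $B$ Hermitian nonnegative, uses cyclicity of the trace to get $\tr(A_1A_2)=\tr(BA_1B)=\sum_j(A_1Be_j,Be_j)$ in the standard basis, and then bounds each term by $|A_1|\,|Be_j|^2$. Both routes reduce to the same elementary estimate $|\langle A_1w,w\rangle|\le|A_1|\,\|w\|^2$ applied to a family of vectors whose squared norms sum to $\tr(A_2)$, so the arguments are essentially equivalent in content; your version avoids the square root but invokes the spectral theorem, the paper's avoids diagonalization but needs the existence of the square root and the cyclic invariance of the trace. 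Your closing remark correctly identifies where nonnegativity enters (pulling $\lambda_i$ out of the absolute value), which matches the role played by $B$ being Hermitian in the paper's argument.
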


\begin{proof}
Since $A_2$ is Hermitian and nonnegative there is a nonnegative Hermitian matrix $B$ such that $A_2=B^2$.
For the standard basis $e_1,\ldots,e_N$ of $\C^N$ and the standard Hermitian scalar product $(\cdot,\cdot)$ we get
\begin{eqnarray*}
|\tr\left(A_1\, A_2\right)|
&=&
|\tr\left(BA_1B\right)|
\,\,=\,\,
\bigg|\sum_{j=1}^N (BA_1Be_j,e_j)\bigg|
\,\,=\,\,
\bigg|\sum_{j=1}^N (A_1Be_j,Be_j)\bigg|\\
&\leq&
\sum_{j=1}^N |A_1|\cdot |Be_j|^2
\,\,=\,\,
|A_1|\cdot\sum_{j=1}^N (B^2e_j,e_j)
\,\,=\,\,
|A_1|\cdot \tr\left(A_2\right) .
\end{eqnarray*}
\end{proof}

\begin{prop}\label{prop_ZQupperbound}
Let $M$ be an $n$-dimensional compact Riemannian manifold without boundary.
For any $x\in M$ and $r>0$ let $\omega_x(r)$ denote the volume of the geodesic ball about $x$ with radius $r$. 
Then there are constants $c_1>0$ and $c_2\in\R$ such that for any $t>0$ and any $\hbar >0$  one has
\begin{equation}\label{ineq:ZQvolume}
Z_Q(\hbar)\le c_1\cdot e^{c_2\cdot t\hbar^2}\int_M\;\frac{\tr\left(e^{-tV(x)}\right)}{\omega_x(\sqrt{t\hbar^2})} \, dx.
\end{equation}
\end{prop}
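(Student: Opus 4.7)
The plan is to assemble the three tools advertised in the introduction (Golden-Thompson, Hess-Schrader-Uhlenbrock, and Schoen-Yau) into a single chain of estimates that bounds $Z_Q(\hbar)$ by a pointwise integral over $M$ of the classical Gibbs density times the reciprocal volume factor. The overall shape is to first separate the kinetic part $\hbar^2 L$ from the potential part $V$ inside the trace, then bound the remaining matrix-valued heat kernel of $\hbar^2 L$ on the diagonal in operator norm by the scalar Laplace-Beltrami heat kernel, and finally replace that scalar heat kernel by the volume factor.

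First I would apply Lemma~\ref{lem:Golden-Thompson} with $B=t\hbar^2 L$ and $C=tV$. Both operators are self-adjoint and bounded below on $L^2(M,E)$ ($V$ is smooth on a compact manifold, hence bounded, and $L$ is bounded below by a constant depending on the endomorphism $W$); their sum $t\Hh$ is essentially self-adjoint on smooth sections, which sit inside $\dom(B)\cap\dom(C)$. Golden-Thompson therefore yields
\[
Z_Q(\hbar)=\Tr\bigl(e^{-t\Hh}\bigr)\le \Tr\bigl(e^{-t\hbar^2L}\,e^{-tV}\bigr).
\]
The operator on the right is trace class with integral kernel $k_L(x,y,t\hbar^2)\circ e^{-tV(y)}\in\Hom(E_y,E_x)$, where $k_L$ is the heat kernel of $L$. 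Writing the trace as $\int_M\tr_{E_x}\bigl(k_L(x,x,t\hbar^2)\,e^{-tV(x)}\bigr)\,dx$ and applying Lemma~\ref{lem:traceestimate} pointwise with $A_1=k_L(x,x,t\hbar^2)$ and the nonnegative Hermitian $A_2=e^{-tV(x)}$ gives
\[
Z_Q(\hbar)\le \int_M\bigl|k_L(x,x,t\hbar^2)\bigr|\cdot \tr\bigl(e^{-tV(x)}\bigr)\,dx.
\]

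Next I invoke the Hess-Schrader-Uhlenbrock domination estimate. Since $L=\n^*\n+W$ with $W$ bounded below by some constant $-w_0$, HSU provides the pointwise bound $|k_L(x,y,s)|\le e^{w_0 s}\,k_\Delta(x,y,s)$, where $k_\Delta$ is the scalar heat kernel of the Laplace-Beltrami operator $\Delta$. Specializing to $s=t\hbar^2$ and $y=x$, we reduce the problem to the on-diagonal scalar heat kernel. The Schoen-Yau upper bound for $k_\Delta$ on a manifold with Ricci curvature bounded below then gives
\[
k_\Delta(x,x,t\hbar^2)\le \frac{c_1\,e^{c'\,t\hbar^2}}{\omega_x\!\bigl(\sqrt{t\hbar^2}\bigr)},
\]
with $c_1$ depending only on $n$ and $c'$ depending on $n$ and the Ricci lower bound. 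Multiplying the two exponential factors and setting $c_2:=w_0+c'$ yields exactly the claimed inequality~\eqref{ineq:ZQvolume}.

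The main conceptual obstacle is the middle step, passing from the matrix-valued heat kernel of $L$ (acting on a vector bundle with arbitrary symmetric endomorphism field $W$) to the scalar heat kernel on functions — this is exactly what HSU buys us, and without it the Schoen-Yau estimate would not be directly applicable. The only technical care needed is checking the hypotheses of Golden-Thompson, which is routine on a compact manifold, and making sure the operator-norm trace inequality is applied with the correct choice of $A_1,A_2$ so that positivity is on the right factor. All other steps are just book-keeping of constants.
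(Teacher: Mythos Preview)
Your argument is correct and reaches the same inequality with the same constants, but the decomposition you use differs from the paper's. You split $\Hh=\hbar^2 L + V$ and apply Golden--Thompson once with $B=t\hbar^2 L$, $C=tV$; the endomorphism field $W$ is then absorbed into the Hess--Schrader--Uhlenbrock step via the extended Kato-type bound $|k_L(x,y,s)|\le e^{w_0 s}\,k_\Delta(x,y,s)$ when $W\ge -w_0$. The paper instead splits $\Hh=\hbar^2\nabla^*\nabla + (V+\hbar^2 W)$, applies Golden--Thompson with $B=t\hbar^2\nabla^*\nabla$ and $C=t(V+\hbar^2 W)$, uses the ``clean'' HSU inequality $|k^{\nabla^*\nabla}_s(x,y)|\le k^\Delta_s(x,y)$ exactly as in the cited reference (no potential correction needed), and then invokes Golden--Thompson a \emph{second} time on the finite-dimensional fiber $E_x$ to separate $e^{-tV(x)}$ from $e^{-t\hbar^2 W(x)}$. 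Your route is one step shorter and conceptually cleaner; the paper's route has the advantage of citing HSU verbatim without needing the (standard but not explicitly referenced) extension to Schr\"odinger-type operators $\nabla^*\nabla+W$. Either way the resulting constant is $c_2=\tilde c - w_0$ (in the paper's sign convention $W\ge w_0$), matching the remark following the proposition.
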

\begin{proof}
We apply Lemma~\ref{lem:Golden-Thompson} with $B=t\hbar^2\nabla^*\nabla$ and $C=t(V+\hbar^2 W)$ and use the notations $k_{t}^{\nabla^*\nabla}(x,y)$ and $k_{t}^{\Delta}(x,y)$ for the kernels of the heat operators $e^{-t\nabla^*\nabla}$ and $e^{-t\Delta}$ respectively, where $\Delta$ is the Laplace-Beltrami operator acting on functions:
\begin{eqnarray*}
Z_Q(\hbar)
&\le& 
\Tr\left(e^{-t\hbar^2\nabla^*\nabla}e^{-t(V+\hbar^2 W)} \right)\\
&=& 
\int_M \tr\left(k_{t\hbar^2}^{\nabla^*\nabla}(x,x)e^{-t(V(x)+\hbar^2 W(x))} \right)dx\\
&\stackrel{\eqref{eq:trNtr}}{\le}& 
\int_M \left|k_{t\hbar^2}^{\nabla^*\nabla}(x,x)\right|\cdot\tr\left(e^{-t(V(x)+\hbar^2 W(x))} \right)dx\\
&\le& 
\int_M k_{t\hbar^2}^{\Delta}(x,x)\;\tr\left(e^{-t(V(x)+\hbar^2 W(x))} \right)dx
\end{eqnarray*}
where for the last inequality we have the Hess-Schrader-Uhlenbrock inequality, i.e.\ $\left|k_{t}^{\nabla^*\nabla}(x,y)\right|\le k_{t}^{\Delta}(x,y)$ for any $x,y\in M$ and $t>0$ (see \cite{HSU}).
Applying Lemma~\ref{lem:Golden-Thompson} once more (with $B=tV(x)$ and $C=t\hbar^2 W(x)$), using Lemma~\ref{lem:traceestimate} and choosing a $w_0\in\R$ with $W\ge w_0$ leads to 
\begin{eqnarray}
Z_Q(\hbar)
&\le& \int_M k_{t\hbar^2}^{\Delta}(x,x)\;\tr\left(e^{-tV(x)}e^{-t\hbar^2 W(x)} \right)dx\nonumber \\
&\le& \int_M k_{t\hbar^2}^{\Delta}(x,x)\;e^{-t\hbar^2 w_0} \;\tr\left(e^{-tV(x)}\right)dx\label{eq:ZQ_kernelestimate} .
\end{eqnarray}
From~\cite[Thm.~4.6]{SY} we get a pointwise estimate for the heat kernel of the Laplace-Beltrami operator:
There are constants $c_1, \widetilde{c}>0$ such that for any $\tau >0$ and any $x\in M$ one has
\begin{equation}\label{eq:estimateschoenyau}
k_\tau^\Delta(x,x)\le c_1\cdot \frac{1}{\omega_x(\sqrt{\tau})}\cdot e^{\widetilde{c}\cdot \tau}.
\end{equation}
Inserting this into (\ref{eq:ZQ_kernelestimate}) and taking $c_2=\widetilde{c}-w_0$ yields the claim.
\end{proof}
\begin{rem}
The constants $c_1,c_2$ can be given explicitly:
Let $\alpha>1$, $\delta>0$ and $n=\dim(M)$, let $\kappa>0$ with $\Ric\ge -\kappa$ and let $w_0\in\R$ with $W\ge w_0$.
In (\ref{eq:estimateschoenyau}) one can take $c_1=(1+\delta)^{n\alpha}\exp\left(\frac{1+\alpha}{\delta}\right)$ and $\widetilde{c}=\frac{\alpha n}{\alpha-1}\cdot \kappa\cdot \delta$ (compare~\cite[Thm.~4.6]{SY}) and then $c_2=\widetilde{c}-w_0$.
\end{rem}

Given $K\in\R$ and $r\in(0,\frac{\pi}{\sqrt{K}})$ (where we use the convention $\frac{\pi}{\sqrt{K}}=\infty$ for $K\leq0$) let $v_{K,n}(r)$ denote the volume of a geodesic ball of radius $r$ in the $n$-dimensional model space of constant curvature $K$.
Recall that this model space is hyperbolic space, Euclidean space, or the sphere with their appropriately scaled standard metrics.

\begin{cor}\label{cor:ZAbsch}
Let $M$ be an $n$-dimensional compact Riemannian manifold without boundary, and let $K\in\R$ be an upper bound for the secional curvature of $M$ and let $\iota$ denote the injectivity radius of $M$.
Then there are constants $c_2,c_3\in\R$ such that for any $t,\hbar>0$ with $t\hbar^2<\frac{\pi^2}{K}$ and $t\hbar^2<\iota^2$ we have
\[
\frac{Z_Q(\hbar)}{Z_C(\hbar)}
\,\,\le\,\, 
c_3 \cdot e^{c_2\cdot t\hbar^2}\cdot \frac{v_{0,n}\left(\sqrt{t\hbar^2}\right)}{v_{K,n}\left(\sqrt{t\hbar^2}\right)} .
\]
\end{cor}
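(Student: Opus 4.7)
The plan is to combine Proposition~\ref{prop_ZQupperbound} with the explicit evaluation of $Z_C(\hbar)$ already carried out in the proof of Corollary~\ref{cor:ZqZc}, and to convert the geodesic ball volume $\omega_x(\sqrt{t\hbar^2})$ appearing in \eqref{ineq:ZQvolume} into the model-space volume $v_{K,n}(\sqrt{t\hbar^2})$ via Günther's volume comparison theorem.

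First, I would recall from the computation inside the proof of Corollary~\ref{cor:ZqZc} the closed form
\[
Z_C(\hbar) \;=\; (2\sqrt{\pi t}\,\hbar)^{-n} \int_M \tr\bigl(e^{-tV(x)}\bigr)\,dx ,
\]
and then divide the bound of Proposition~\ref{prop_ZQupperbound} by this expression. The integrals against $\tr(e^{-tV(x)})$ in numerator and denominator can be compared pointwise if we have a uniform lower bound on $\omega_x(\sqrt{t\hbar^2})$; any such bound can be pulled out of the integral, after which the integrals cancel and one is left with
\[
\frac{Z_Q(\hbar)}{Z_C(\hbar)} \;\le\; c_1 \cdot 2^n \pi^{n/2} \cdot e^{c_2 t\hbar^2} \cdot \frac{(t\hbar^2)^{n/2}}{\inf_{x\in M}\omega_x(\sqrt{t\hbar^2})} .
\]

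The second step is to invoke Günther's comparison theorem: under the hypothesis $\sec\le K$ and for radii $r<\min(\pi/\sqrt{K},\iota)$, every geodesic ball $B_r(x)\subset M$ has volume at least that of a ball of the same radius in the $n$-dimensional model space of constant curvature $K$, i.e.\ $\omega_x(r)\ge v_{K,n}(r)$. Since $t\hbar^2<\pi^2/K$ and $t\hbar^2<\iota^2$ guarantee $\sqrt{t\hbar^2}<\min(\pi/\sqrt{K},\iota)$, we may apply this with $r=\sqrt{t\hbar^2}$. Using $v_{0,n}(r) = \omega_n r^n$ with $\omega_n = \pi^{n/2}/\Gamma(n/2+1)$, the factor $(t\hbar^2)^{n/2}$ rewrites as $v_{0,n}(\sqrt{t\hbar^2})/\omega_n$, and we arrive at the claimed inequality with $c_3 := c_1\cdot 2^n\Gamma(n/2+1)$ and the same $c_2$ as in Proposition~\ref{prop_ZQupperbound}.

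There is no real obstacle beyond correctly invoking Günther's theorem — the direction of the inequality (upper sectional curvature bound gives a lower bound on ball volumes, not the Bishop-Gromov lower bound on Ricci) is the only point that needs care, and this is precisely why the hypothesis on $K$ is phrased as a sectional curvature upper bound. The restriction $t\hbar^2<\min(\pi^2/K,\iota^2)$ is exactly what makes the comparison theorem applicable.
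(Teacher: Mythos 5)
Your proposal is correct and follows essentially the same route as the paper: divide the bound of Proposition~\ref{prop_ZQupperbound} by the closed form of $Z_C(\hbar)$ obtained in the proof of Corollary~\ref{cor:ZqZc}, and bound $\omega_x(\sqrt{t\hbar^2})$ from below by $v_{K,n}(\sqrt{t\hbar^2})$ via the G\"unther (Bishop--G\"unther) comparison theorem, which is exactly how the paper argues. Your explicit constant $c_3=c_1\cdot 2^n\Gamma(n/2+1)$ coincides with the paper's $c_3=c_1\,(2\sqrt{\pi})^n/v_{0,n}(1)$.
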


\begin{proof}
For $r<\min\{\iota,\frac{\pi}{\sqrt{K}}\}$ the Bishop-G\"unther Theorem \cite[Thm.~3.7]{Chavel} states $\omega_x(r)\ge v_{K,n}(r)$ for all $x\in M$.
Inserting this into (\ref{ineq:ZQvolume}) yields
\begin{eqnarray*}
Z_Q(\hbar)
&\le& 
c_1\cdot e^{c_2\cdot t\hbar^2}\cdot \frac{1}{v_{K,n}\left(\sqrt{t\hbar^2}\right)}\cdot \int_M\;\tr\left(e^{-tV(x)}\right)dx\\
&=&
c_1\cdot e^{c_2\cdot t\hbar^2}\cdot \frac{1}{v_{K,n}\left(\sqrt{t\hbar^2}\right)}\cdot (2\sqrt{\pi t}\hbar)^n \cdot Z_C(\hbar)
\end{eqnarray*}
for any $t,\hbar>0$ with $t\hbar^2<\frac{\pi^2}{K}$ and $t\hbar^2<\iota^2$.
Putting $c_3 :=  c_1 \cdot \frac{(2\sqrt{\pi})^n}{v_{0,n}(1)}$ concludes the proof.
\end{proof}

\begin{rem}
Even optimal choices of $\alpha$ and $\delta$ yield a constant $c_3$ which is much larger than $1$.
Therefore Corollary~\ref{cor:ZAbsch} does not even imply half of Corollary~\ref{cor:ZqZc}, namely 
$$
\lim_{\hbar\searrow 0}\frac{Z_Q(\hbar)}{Z_C(\hbar)} \,\,\leq\,\, 1.
$$
We do not know whether Corollary~\ref{cor:ZAbsch} holds with $c_3
=1$.
\end{rem}

\end{document}